\newtheorem{theorem}{Theorem}
\newtheorem{lemma}{Lemma}
\newtheorem{example}{Example}[section]
\newtheorem{assumption}{Assumption}
\newcommand\copyrighttext{%
  \footnotesize \textcopyright IEEE. Personal use of this material is permitted. However, permission to reprint/republish this material for advertising or promotional purposes or for creating new collective works for resale or redistribution to servers or lists, or to reuse any copyrighted component of this work in other works must be obtained from the IEEE.}
\newcommand\copyrightnotice{%
\begin{tikzpicture}[remember picture,overlay]
\node[anchor=south,yshift=10pt] at (current page.south) {\fbox{\parbox{\dimexpr\textwidth-\fboxsep-\fboxrule\relax}{\copyrighttext}}};
\end{tikzpicture}%
}
\newcommand{\mc}[1]{\ensuremath{\mathcal{#1}}}
\newcommand{\cyx}{\ensuremath{C_{Y\rightarrow X}}}
\newcommand{\estcyx}{\ensuremath{\hat{C}_{Y\rightarrow X}}}
\newcommand{\optcyx}{\ensuremath{C^*_{Y\rightarrow X}}}
\newcommand{\fxr}[1]{\ensuremath{f_{X_#1}^{(r)}}}
\newcommand{\fxc}[1]{\ensuremath{f_{X_#1}^{(c)}}}
\newcommand{\estfxr}[1]{\ensuremath{\hat{f}_{X_#1}^{(r)}}}
\newcommand{\estfxc}[1]{\ensuremath{\hat{f}_{X_#1}^{(c)}}}
\newcommand{\optfxr}[1]{\ensuremath{f_{X_#1}^{(r)*}}}
\newcommand{\optfxc}[1]{\ensuremath{f_{X_#1}^{(c)*}}}
\newcommand{\history}[1]{\ensuremath{\mc{H}_#1}}
\newcommand{\hr}[1]{\ensuremath{\history{i}^{(r)}}}
\newcommand{\hc}[1]{\ensuremath{\history{i}^{(c)}}}
\newcommand{\kl}[2]{\ensuremath{D(#1 \mid \mid #2)}}
\newcommand{\refclass}{\ensuremath{\tilde{\mc{P}}}}
\newcommand{\refclassr}{\ensuremath{\refclass^{(r)}}}
\newcommand{\refclassc}{\ensuremath{\refclass^{(c)}}}
\newcommand{\mcal}[1]{\ensuremath{\mathcal{#1}}}
\begin{document}

\title{A Sample Path Measure of Causal Influence}

\author{
Gabriel~Schamberg, \emph{Student Member, IEEE},
Todd~P.~Coleman, \emph{Senior Member,~IEEE}
}

\maketitle

\copyrightnotice

%%%%%%%%%%%%%%%%%%%%%%%%%%%%%%%%%%%%%%%%%%%%%%%%%%%%%%%%%%%%%%%%%%%%%%%%%%%%%%%

\begin{abstract}
We present a sample path dependent measure of causal influence between two time series. The proposed measure is a random variable whose expected sum is the directed information. A realization of the proposed measure may be used to identify the specific patterns in the data that yield a greater flow of information from one process to another, even in stationary processes. We demonstrate how sequential prediction theory may be leveraged to obtain accurate estimates of the causal measure at each point in time and introduce a notion of regret for assessing the performance of estimators of the measure. We prove a finite sample bound on this regret that is determined by the regret of the sequential predictors used in obtaining the estimate. We estimate the causal measure for a simulated collection of binary Markov processes using a Bayesian updating approach. Finally, given that the measure is a function of time, we demonstrate how estimators of the causal measure may be extended to effectively capture causality in time-varying scenarios.
\end{abstract}

%%%%%%%%%%%%%%%%%%%%%%%%%%%%%%%%%%%%%%%%%%%%%%%%%%%%%%%%%%%%%%%%%%%%%%%%%%%%%%%

% Note that keywords are not normally used for peerreview papers.
\begin{IEEEkeywords}
Granger Causality, Relative Entropy, Sequential Prediction, Markov Chains.
\end{IEEEkeywords}

\IEEEpeerreviewmaketitle

%%%%%%%%%%%%%%%%%%%%%%%%%%%%%%%%%%%%%%%%%%%%%%%%%%%%%%%%%%%%%%%%%%%%%%%%%%%%%%%

\section{Introduction}

\begin{comment}
- Granger Causality
- Directed Information
- Transfer Entropy
- Local Transfer Entropy
- Sequential Prediction
- Argument for why sample path causality is important. Perhaps a neuroscientific argument that even if the model is time invariant, there may be certain patterns that induce a greater level of causal influence.
\end{itemize}}
\end{comment}

In 1969, Granger \cite{granger1969investigating} built upon the the ideas of Wiener by proposing an approach to identify causal relationships between time series. While his original treatment was applied only to linear regression models, his underlying perspective that a time series $Y^n$ is ``causing'' $X^n$ if we can better predict $X^n$ given all information than given all information excluding $Y^n$ is still utilized throughout causality research. More modern information theoretic interpretations of this principle include directed information (DI) \cite{marko1973bidirectional,massey1990causality} and transfer entropy (TE) \cite{schreiber2000measuring}, which is equivalent to Granger causality (GC) for Gaussian autoregressive processes \cite{barnett2009granger}. Both of these quantities measure the reduction in uncertainty (i.e. conditional entropy) of the future of $X^n$ that is obtained by including the past of $Y^n$ in the available information in an appropriate sense. Interestingly, both quantities are determined by taking expectations over all sequences, and thus are dependent solely on a system's underlying distribution and not a given realization of the collection of processes.

These quantities may be adjusted to incorporate a notion of locality through use of \emph{self-information}. For a given realization $x$ of a random variable $X\sim f_X$, the self-information is given by $h(x) \triangleq -\log f_X(x)$ and represents the amount of surprise associated with that realization. By replacing entropy with self-information, and its conditional form $h(x\mid y) \triangleq -\log f_{X\mid Y}(x\mid y) $, local versions of DI, TE, and their conditional extensions may be obtained (see Table 1 in \cite{lizier2014jidt}). While the local extensions of DI and TE are indeed dependent on realizations, they may take on negative values. Such a scenario occurs when the knowledge that $Y=y$ makes the observation of $X=x$ less likely to have occurred, i.e. $f_X(x) > f_{X\mid Y}(x\mid y)$. While an interesting concept, it is not clear how to interpret negative values in the context of assessing the presence/absence of a causal link.

As such, estimating the causal structure (i.e. directed graph) of a collection of processes typically involves estimating an averaged measure such as DI \cite{quinn2015directed,amblard2011directed}. In a time varying scenario, however, it would be necessary to replace an expectation over time with some sort of windowing technique as in \cite{oselio2017dynamic}. As a result, estimates of this form may be able to capture changes in the underlying system model, but do not reflect the varying levels of causal influence that occur within windows of time for which there is stationarity (see Example \ref{example}). Here we build on the causal inference perspective presented in \cite{kim2014dynamic} and propose a causal measure that captures changes in time without requiring a windowing approach. Furthermore, we introduce a framework for estimating the causal measure using sequential prediction and derive a finite sample bound on the accuracy of such estimators.
\section{Sample Path Measure of Causal Influence} \label{measure}
\begin{comment}
- Definition
- True causal measure is random
- Can be applied to time varying settings
- Always non-negative
- Is always well defined (as opposed to directed information rate)
- ``Semi-local'' - Measure such as DI and TE are functions of the model (can't capture local influence), but local measures like pointwise mutual information can be negative. Our measure is local and positive.
\end{comment}

Suppose we observe the stochastic processes $X^n \in \mcal{X}^n$, $Y^n \in \mcal{Y}^n$, and $Z^n \in \mcal{Z}^n$, characterized by the joint probability mass function (pmf) $f_{X^n,Y^n,Z^n}(x^n,y^n,z^n)$. Although this work applies more generally, for the purpose of exposition we only consider discrete probability measures. We begin by considering the scenario where, having observed $(x^{i-1},y^{i-1},z^{i-1})$, we wish to determine the causal influence that $y^{i-1}$ has on the next observation $x_i$. In such a scenario, we consider the following \emph{restricted} (denoted $(r)$) and \emph{complete} (denoted $(c)$) conditional distributions:
\begin{align}
\fxr{i}(x_i) &\triangleq f_{X_i \mid X^{i-1},Z^{i-1}}
    (x_i \mid x^{i-1},z^{i-1}) \\
\fxc{i}(x_i) &\triangleq f_{X_i \mid X^{i-1},Y^{i-1},Z^{i-1}}
    (x_i \mid x^{i-1},y^{i-1},z^{i-1}).
\end{align}

\noindent Using these distributions, at each time $i$ we define the sample path measure of causality from $Y$ to $X$ in the presence of side information $Z$ for given realizations $(x^{i-1},y^{i-1},z^{i-1})$ as:
\begin{equation}
\cyx(x^{i-1},y^{i-1},z^{i-1}) = \kl{\fxc{i}}{\fxr{i}}.
\end{equation}

\noindent For ease of notation, we may represent the causal measure at time $i$ simply as $\cyx(i)$.

The key observation that must be made is that \fxc{i} and \fxr{i} are determined by the realizations of $X^n$, $Y^n$, and $Z^n$. As a result, \emph{the causal measure is a random variable}. In this regard, our causal measure is different from previous measures of causality wherein the causal influence is determined by the model, and not the sample path. To ensure this point is made clear, we will present an example.

\begin{example}\label{example}
Suppose $Y_i \sim \text{Bern}(0.2)$ iid for $i=1,2,\dots$ and:
\begin{equation*}
X_i \sim
\begin{cases}
      \text{Bern}(0.9), & Y_{i-1} = 1 \\
      \text{Bern}(0.5), & Y_{i-1} = 0
\end{cases}
\end{equation*}

\noindent Intuitively, we would expect that in some sense $Y^n$ is ``causing'' $X^n$ to a greater extent when $Y_i$ is one than when it is zero. In order to formalize this, we have to find the probability of $X_i=1$ when only $X^{i-1}$ is known (i.e. the restricted distribution):
\begin{equation*}
\begin{aligned}
\mathbb{P}(&X_i = 1 | X^{i-1}=x^{i-1}) \\
&= \mathbb{P}(X_i = 1) \\
&= \sum_{y_{i-1}\in \{0,1\}}
    \mathbb{P}(X_i =1 \mid Y_{i-1} = y_{i-1}) \mathbb{P}(Y_{i-1} = y_{i-1}) \\
&= (0.5)(0.8) + (0.9)(0.2) \\
&= 0.58.
\end{aligned}
\end{equation*}

\noindent We can fully characterize the complete and restricted probability mass functions (pmfs) using these probabilities, i.e. $\fxr{i}(1) = 0.58$, $\fxc{i}(1) = 0.9$ if $y_{n-1}=1$, and $\fxc{i}(1) = 0.5$ if $y_{n-1}=0$. We can now compute the causal measure, which takes on one of two values determined by the observation $y_{i-1}$:
\begin{equation*}
\cyx(i) =
\begin{cases}
      0.363, & y_{i-1} = 1 \\
      0.019, & y_{i-1} = 0
\end{cases}
\end{equation*}

\noindent Thus, we see that our measure captures how, even in a stationary Markov chain, different patterns in the observed data may give rise to different levels of causal influence. By contrast, we note that because the process is stationary, the directed information rate and transfer entropy are both given simply by $E[\cyx]=(0.9)(0.019)+(0.1)(0.363)=0.088$.
\end{example}

The above example gives rise to two key observations. First, even stationary Markov processes exhibit dynamic causal behaviors that are not captured when taking an outer expectation. Second, by averaging over all possible histories, TE and DI are minimally affected by patterns that occur with low probability, even if those patterns induce a high level of causal influence.

We now discuss some key properties of the proposed causal measure. First, we note the crucially important quality of non-negativity, which follows directly from the non-negativity of KL-Divergence. Next, we characterize our measure as being ``semi-local.'' We note that GC, DI (rate), TE, etc. are all \emph{expectations}, determined entirely by the underlying probabilistic model of the observed data. In this regard, these measures are not local representations of the observed data. On the other end of the spectrum, local data-dependent versions of these measures may be obtained by substituting the self information for entropy, but these local versions of causal measures may be negative when unlikely sequences occur. Our measure is ``semi-local'' in the sense that at any given time, the measure is determined by the observations from the past, but guarantees non-negativity by taking an expectation over the future.

\begin{comment}
\noindent This process can be equivalently characterized by the four-state ``complete'' Markov Chain with states $(X,Y) \in \{0,1\}^2$ and transition matrix:

\begin{equation}
M^{(c)} =
\begin{blockarray}{ccccc}
(0,0) & (0,1) & (1,0) & (1,1) \\
\begin{block}{(cccc)c}
    0.25 & 0.1 & 0.25 & 0.1 & (0,0) \\
    0.25 & 0.9 & 0.25 & 0.9 & (0,1) \\
    0.25 & 0.1 & 0.25 & 0.1 & (1,0) \\
    0.25 & 0.9 & 0.25 & 0.9 & (1,1) \\
\end{block}
\end{blockarray}
\end{equation}

\noindent where $M^{(c)}_{kj}$ represents the probability of having $(X_i,Y_i)$ in state $k$ given that $(X_{i-1},Y_{i-1})$ is in state $j$. We can additionally define a two-state ``restricted'' Markov Chain with states $X \in \{0,1\}$ and transition matrix:

\begin{equation}
M^{(r)} =
\begin{blockarray}{ccc}
0 & 1 \\
\begin{block}{(cc)c}
    0.3 & 0.3 & 0 \\
    0.7 & 0.7 & 1 \\
\end{block}
\end{blockarray}
\end{equation}
\end{comment}
\section{Estimation of the Causal Measure} \label{estimation}

\begin{comment}
- Sequential Prediction
- Notion of causality regret - whereas DI has one value that the Tsachy paper tries to converge to, we are looking more at a sequential prediction problem because we want to estimate causality accurately at every point in time
- Theorem
- Independent selection of restricted and complete reference classes - addresses problem with restricted distribution being infinite order raised in Purdon paper
- Discussion of estimating with time varying statistics
- Can be computed online
\end{comment}

An estimate of the causal measure can be obtained by simply estimating the complete and restricted distributions and then computing the KL divergence between the two at each time. Such an estimator allows us to leverage results from the field of sequential prediction. The sequential prediction problem formulation we consider is as follows: for each round $i \in \{1,\dots,n\}$, having observed some history $\history{i}$, a learner selects a probability assignment $\hat{f}_i \in \mc{P}$, where $\mc{P}$ is the space of probability distributions over $\mc{X}$. Once $\hat{f}_i$ is chosen, $x_i$ is revealed and a loss $l(\hat{f}_i,x_i)$ is incurred by the learner, where the loss function $l:\mc{X}\rightarrow \mathbb{R}$ is chosen to be the self-information loss given by $l(f,x) = -\log f(x)$.

The performance of sequential predictors may be assessed using a notion of \emph{regret} with respect to a reference class of probability distributions $\refclass \subset \mc{P}$. For a given round $i$ and reference distribution $\tilde{f}_i \in \refclass$, the learner's regret is:

\begin{equation}
r(\hat{f}_i,\tilde{f}_i,x_i) = l(\hat{f}_i,x_i) - l(\tilde{f}_i,x_i)
\end{equation}

\noindent In many cases the performance of sequential predictors will be measured by the worst case regret, given by:

\begin{align}
R_n(\refclass_n) &= \sup_{x^n \in \mc{X}^n} \sum_{i=1}^n l(\hat{f}_i,x_i) - \inf_{\tilde{f}\in \refclass_n} \sum_{i=1}^n l(\tilde{f}_i,x_i) \label{optimal_f} \\
&\triangleq \sup_{x^n \in \mc{X}^n} \sum_{i=1}^n r(\hat{f}_i,f^*_i,x_i)
\end{align}

\noindent where $f^*_i \in \refclass$ is defined as the distribution from the reference class with the smallest cumulative loss up to time $n$, i.e. the $\tilde{f}_i$ for which $R_n$ is largest. We also define $f^* \in \refclass_n \subset \mc{P}^n$ to be the cumulative loss minimizing \emph{joint} distribution, noting that the reference class of joint distributions $\refclass_n$ is not necessarily equal to $\refclass^n$ (i.e. $\refclass \times \refclass \times \dots$), as often times there may be a constraint on the selection of the best reference distribution that is imposed in order to establish bounds. In the absence of any restrictions, the reference distributions may be selected at each time such that $f^*_i(x_i)=1$, resulting in zero cumulative loss for any sequence $x^n$. Thus, bounds on regret often assume stationarity by enforcing $f_1^*=f_2^*=\dots=f_n^*$ or assume that $f_i^* = f^*_{i+1}$ for all but some small number of indices. For various learning algorithms (i.e. strategies for selecting $\hat{f}_i$ given $\history{i}$) and reference classes $\refclass_n$, these bounds on the worst case regret are defined as a function of the sequence length $n$:

\begin{equation}
R_n(\refclass_n) \le M(n)
\end{equation}

It follows naturally that an estimator for our causal measure can be constructed by building two sequential predictors. The restricted predictor $\estfxr{i}$ computed at each round using $\hr{i} \triangleq \{x_1,\dots,x_{i-1}\} \cup \{z_1,\dots,z_{i-1}\}$, and the complete predictor $\estfxc{i}$ computed at each round using $\hc{i} \triangleq \{x_1,\dots,x_{i-1}\} \cup \{y_1,\dots,y_{i-1}\} \cup \{z_1,\dots,z_{i-1}\}$. It then follows that each of these predictors will have an associated worst case regret, given by $R^{(r)}_n(\refclassr_n)$ and $R^{(c)}_n(\refclassc_n)$, where $\refclassr_n$ and $\refclassc_n$ represent the restricted and complete reference classes. Using these sequential predictors, we define our estimated causal influence from $Y$ to $X$ at time $i$ as:
\begin{equation}
\estcyx(i) = \kl{\estfxc{i}}{\estfxr{i}}
\end{equation}

\noindent It should be noted that when averaged over time, this estimator becomes a universal estimator of the directed information rate for certain predictors and classes of signals \cite{jiao2013universal}.

To assess the performance of an estimate of the causal measure, we define a notion of causality regret:
\begin{equation}
CR(n) \triangleq \sum_{i=1}^n \left| \estcyx(i) - \optcyx(i)  \right|
\end{equation}

\noindent where we define:
\begin{equation}
\optcyx(i) = \kl{\optfxc{i}}{\optfxr{i}}
\end{equation}

\noindent with $\optfxc{i} \in \refclass^{(c)}$ and $\optfxr{i} \in \refclass^{(r)}$ defined as the loss minimizing distributions from the complete and restricted reference classes. We note that with this notion of causal regret, the estimated causal measure is being compared against the best estimate of the causal measure from within a reference class. As such, we limit our consideration to the scenario in which the reference classes are sufficiently representative of the true sequences to produce a desirable $\optcyx$ (i.e. $\optcyx(i) \approx \cyx(i)$ for all $i$).

We now present the necessary preliminaries for proving a finite sample bound on the estimates of causality regret for the special case when $\mc{X}$ is a discrete space. We begin by introducing two assumptions.

\begin{assumption} \label{assumption:gbound}
For sequential predictors \estfxc{i} and \estfxr{i} and observations $(x^n,y^n,z^n)\in \mc{X}^n \times \mc{Y}^n \times \mc{Z}^n$, we assume that there exists some $L\in \mathbb{R}$ for which the collection of observations is such that:
\begin{equation}
\sup_{x \in \mc{X}} \left| \log \frac{\estfxc{i}(x)}{\estfxr{i}(x)} \right| < L \ \ \forall i=1,\dots,n
\end{equation}
\end{assumption}

\noindent Noting that $L$ is lower bounded by $\estcyx$, Assumption \ref{assumption:gbound} implies (given its role in Theorem \ref{thm:main_result}) that larger levels of causal influence take longer to estimate accurately.

\begin{assumption} \label{assumption:referencekl}
For loss minimizing distributions $\optfxc{i} \in \refclass^{(c)}$ and $\optfxr{i} \in \refclass^{(r)}$, restricted sequential predictor \estfxr{i}, and observations $(x^n,y^n,z^n)\in \mc{X}^n \times \mc{Y}^n \times \mc{Z}^n$:
\begin{equation}
\sum_{i=1}^n\left|E_{\optfxc{i}}\left[
r(\estfxr{i},\optfxr{i},X)
\right]\right| \le M^{(r)}(n)
\end{equation}
\end{assumption}

\noindent While it is understood that the expected regret is in general bounded by worst case regret, assumption \ref{assumption:referencekl} requires that the reference classes are sufficiently rich that the expected regret is not too large in \emph{absolute value}. This is necessary in bounding the causality regret because unlike the regret defined by \eqref{optimal_f}, $CR(n)$ \emph{increases} when the estimated distributions outperform the regret minimizing distributions.

We now show that the cumulative KL divergence from the best reference distribution to the predicted distribution is less than the predictor's worst-case regret.
\begin{lemma}\label{lemma:kl}
For a sequential predictor $\hat{f}_i$ with worst case regret $M(n)$, a collection observations $(x^n,y^n,z^n)$, and any distribution from the reference class $f \in \refclass_n$:
\begin{equation}
\sum_{i=1}^n \kl{f_i}{\hat{f}_i}\le M(n)
\end{equation}
\end{lemma}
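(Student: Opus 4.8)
The plan is to convert the definition of worst-case regret in \eqref{optimal_f} into a pointwise per-round regret bound and then recognize the left-hand side of the claim as an expectation of that bound. Throughout I read $f_i$ and $\hat{f}_i$ as the $i$-th coordinate predictors, i.e.\ the conditional distributions over the next symbol given the observed past.

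First I would establish a pointwise bound. For any joint reference distribution $f\in\refclass_n$ and any realization $x^n\in\mc{X}^n$, the cumulative self-information loss of $f$ telescopes to $\sum_{i=1}^n l(f_i,x_i)=-\log f(x^n)$, which is at least $\inf_{\tilde f\in\refclass_n}\sum_{i=1}^n l(\tilde f_i,x_i)$ because $f$ itself lies in $\refclass_n$. Substituting into \eqref{optimal_f} gives, for \emph{every} sequence $x^n$,
\[
\sum_{i=1}^n\big(l(\hat{f}_i,x_i)-l(f_i,x_i)\big)\;\le\;\sum_{i=1}^n l(\hat{f}_i,x_i)-\inf_{\tilde f\in\refclass_n}\sum_{i=1}^n l(\tilde f_i,x_i)\;\le\;R_n(\refclass_n)\;\le\;M(n).
\]

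Next I would pass from regret to KL divergence. Since $l(f,x)=-\log f(x)$, each summand of the claim is an expected per-round regret,
\[
\kl{f_i}{\hat{f}_i}=\sum_{x\in\mc{X}}f_i(x)\log\frac{f_i(x)}{\hat{f}_i(x)}=E_{X\sim f_i}\big[l(\hat{f}_i,X)-l(f_i,X)\big].
\]
Summing over $i$ and composing the per-round expectations into a single expectation over $X^n\sim f$ (the chain rule for the joint $f$), the quantity $\sum_{i=1}^n\kl{f_i}{\hat{f}_i}$ equals $E_{X^n\sim f}\big[\sum_{i=1}^n(l(\hat{f}_i,X_i)-l(f_i,X_i))\big]$; since the bracketed quantity is bounded by $M(n)$ for every realization in the support of $f$, the expectation is as well, which is the claim.

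The main obstacle is this last bookkeeping step: one has to check that the pointwise regret bound genuinely applies after re-randomizing the symbols according to $f$, which works precisely because the supremum in \eqref{optimal_f} already ranges over all sequences and hence over every history the predictor could see, and that the chain rule is invoked consistently with $\hat{f}_i$ and $f_i$ understood as the next-symbol predictions along the realized path. The remaining manipulations — telescoping the log-loss and expanding the KL divergence — are routine.
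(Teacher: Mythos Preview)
Your argument establishes a different statement from the one in the lemma. The lemma fixes a realization $(x^n,y^n,z^n)$; consequently $f_i$ and $\hat f_i$ on the left-hand side are the \emph{particular} next-symbol distributions obtained from that observed history, and the claim is a pointwise bound valid at that single history. Your chain-rule step, however, only yields
\[
\sum_{i=1}^n \kl{f_i}{\hat f_i}\;=\;E_{X^n\sim f}\!\left[\sum_{i=1}^n\big(l(\hat f_i,X_i)-l(f_i,X_i)\big)\right]
\]
when the $f_i,\hat f_i$ inside the expectation are re-evaluated along the random prefix $X^{i-1}$. If they are frozen at the observed $x^{i-1}$ (as the lemma requires), the identity fails: the right-hand side becomes an expectation under the \emph{marginal} law of $X_i$, not under the conditional $f_i$ you need. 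What your argument actually proves is the averaged bound $E_{X^{i-1}\sim f}\big[\sum_i \kl{f_i(X^{i-1})}{\hat f_i(X^{i-1})}\big]\le M(n)$, which does not give the per-realization control used downstream (e.g.\ in Lemma~\ref{lemma:g_func}).

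The paper avoids this precisely by \emph{not} taking an expectation: it upper-bounds each $\kl{f_i}{\hat f_i}$ crudely by $\sup_{x\in\mc X}\log\frac{f_i(x)}{\hat f_i(x)}=\sup_{x}r(\hat f_i,f_i,x)$, which is valid for the fixed pair $(f_i,\hat f_i)$, and then passes to the worst-case regret. That sup step is exactly the missing idea in your route; once you introduce it, the chain rule and the re-randomization over $X^n\sim f$ become unnecessary.
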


\begin{proof}
\begin{align*}
\sum_{i=1}^n \kl{f_i}{\hat{f}_i}
&= \sum_{i=1}^n \sum_{x\in\mc{X}} f_i(x)
    \log \frac{f_i(x)}{\hat{f}_i(x)} \\
&\le \sum_{i=1}^n
    \left[ \sup_{x\in\mc{X}}
    \log \frac{f_i(x)}{\hat{f}_i(x)} \right]
    \sum_{x\in\mc{X}} f_i(x) \\
&= \sum_{i=1}^n \sup_{x\in\mc{X}} r(\hat{f}_i,f_i,x) \\
&\le \sup_{x^n\in\mc{X}^n} \sum_{i=1}^n r(\hat{f}_i,f_i,x_i)\\
&\le \sup_{x^n\in\mc{X}^n} \sup_{f\in\refclass_n} \sum_{i=1}^n r(\hat{f}_i,f_i,x_i)\\
&\le M(n)
\end{align*} \end{proof}
Next, we bound the cumulative difference in expectation of a bounded function between the best reference distribution and sequential predictor.

\begin{lemma} \label{lemma:g_func}
For a sequential predictor $\hat{f}_i$ with worst case regret $M(n)\ge 1$, a collection observations $(x^n,y^n,z^n)$, cumulative loss minimizing distribution $f^*_i$, and bounded functions $g_i:\mc{X}\rightarrow [-K,K]$ with $K\in\mathbb{R}$:
\begin{equation}
\sum_{i=1}^n \left| E_{f^*_i}[g_i(X)] -
    E_{\hat{f}_i}[g_i(X)] \right| \le
    \frac{\left|\mc{X}\right|K}{\sqrt{2}}\sqrt{n \cdot M(n)}
\end{equation}
\end{lemma}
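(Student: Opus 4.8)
The plan is to control each per-round discrepancy $\left|E_{f^*_i}[g_i(X)] - E_{\hat{f}_i}[g_i(X)]\right|$ by the square root of a relative entropy, and then aggregate over $i$ with a Cauchy--Schwarz step so that Lemma~\ref{lemma:kl} can be applied to the resulting sum of divergences.

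First I would use only the boundedness of $g_i$ to pass to the $\ell_1$ distance between $f^*_i$ and $\hat{f}_i$:
\[
\left| E_{f^*_i}[g_i(X)] - E_{\hat{f}_i}[g_i(X)] \right|
= \left| \sum_{x\in\mc X} \bigl(f^*_i(x) - \hat{f}_i(x)\bigr)\, g_i(x) \right|
\le K \sum_{x\in\mc X} \left| f^*_i(x) - \hat{f}_i(x) \right| .
\]
Next --- and this is the step responsible for the $|\mc X|$ factor --- I would bound each coordinate $|f^*_i(x) - \hat{f}_i(x)|$ separately. Applying the data-processing inequality for relative entropy to the two-cell partition $\{x\},\ \mc X\setminus\{x\}$ shows that $\kl{f^*_i}{\hat{f}_i}$ is at least the binary relative entropy between the Bernoulli distributions with parameters $f^*_i(x)$ and $\hat{f}_i(x)$, and Pinsker's inequality in the binary case shows that this binary relative entropy is at least $2\bigl(f^*_i(x)-\hat{f}_i(x)\bigr)^2$. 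Hence $|f^*_i(x)-\hat{f}_i(x)| \le \sqrt{\tfrac{1}{2}\,\kl{f^*_i}{\hat{f}_i}}$ for every $x$, and summing the $|\mc X|$ such terms gives
\[
\left| E_{f^*_i}[g_i(X)] - E_{\hat{f}_i}[g_i(X)] \right| \le \frac{|\mc X| K}{\sqrt 2}\,\sqrt{\kl{f^*_i}{\hat{f}_i}} .
\]

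Finally I would sum over $i$ and apply Cauchy--Schwarz to the vector $\bigl(\sqrt{\kl{f^*_i}{\hat{f}_i}}\bigr)_{i=1}^n$, obtaining
\[
\sum_{i=1}^n \sqrt{\kl{f^*_i}{\hat{f}_i}} \le \sqrt{n}\,\sqrt{\sum_{i=1}^n \kl{f^*_i}{\hat{f}_i}} \le \sqrt{n\cdot M(n)} ,
\]
where the last inequality is Lemma~\ref{lemma:kl} applied to the predictor $\hat{f}_i$ with the reference distribution $f^*\in\refclass_n$. Chaining the three displays yields exactly $\sum_{i=1}^n \left|E_{f^*_i}[g_i(X)] - E_{\hat{f}_i}[g_i(X)]\right| \le \frac{|\mc X| K}{\sqrt 2}\sqrt{n\cdot M(n)}$; note that $M(n)\ge 1$ is not actually needed in this chain and only serves to keep the bound non-vacuous.

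I do not expect a genuine obstacle here; the one point requiring care is the orientation of the KL divergence, since Lemma~\ref{lemma:kl} bounds $\sum_i \kl{f_i}{\hat{f}_i}$ with the \emph{reference} distribution in the first slot --- which is precisely the direction produced by the data-processing/Pinsker chain above, so the argument closes. (Alternatively one could apply Pinsker directly to $f^*_i$ and $\hat{f}_i$ without splitting into coordinates, replacing $\tfrac{|\mc X|}{\sqrt 2}$ by the smaller constant $\sqrt 2$ and removing the cardinality dependence; I present the per-coordinate version because it reproduces the constant stated in the lemma.)
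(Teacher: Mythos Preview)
Your proof is correct and follows essentially the same route as the paper: bound $|g_i|$ by $K$, replace each coordinate $|f^*_i(x)-\hat f_i(x)|$ by $\sqrt{\tfrac12\kl{f^*_i}{\hat f_i}}$ (the paper just cites ``Pinsker's inequality'' for this step, whereas you spell out the data-processing reduction to the binary case), then use Cauchy--Schwarz/H\"older on $\bigl(\sqrt{\kl{f^*_i}{\hat f_i}}\bigr)_i$ together with Lemma~\ref{lemma:kl}. Your remarks that $M(n)\ge 1$ is not actually used in the chain and that applying Pinsker directly to the full $\ell_1$ distance would improve the constant to $\sqrt{2}$ are both valid side observations.
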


\begin{proof}
\begin{align}
\sum_{i=1}^n &\left| E_{f^*_i} [g_i(X)] -
    E_{\hat{f}_i}[g_i(X)] \right| \nonumber \\
&= \sum_{i=1}^n \left| \sum_{x\in\mc{X}}
    \left[ f^*_i(x) - \hat{f}_i(x) \right] g_i(x) \right| \nonumber \\
&\le \sum_{i=1}^n \sum_{x\in\mc{X}} \left| f^*_i(x) -
    \hat{f}_i(x) \right| \left| g_i(x) \right|
    \label{ref_triangle_eq}\\
&\le \sum_{i=1}^n \sum_{x\in\mc{X}}
    K\sqrt{\frac{1}{2}\kl{f^*_i}{\hat{f}_i}}
    \label{ref_pinsker_assumption}\\
&= \frac{\left|\mc{X}\right| K}{\sqrt{2}} \sum_{i=1}^n
    \sqrt{\kl{f^*_i}{\hat{f}_i}} \nonumber
\end{align}

\noindent where \eqref{ref_triangle_eq} uses the triangle inequality and \eqref{ref_pinsker_assumption} uses Pinsker's inequality and the boundedness of $g_i$. Focusing on the sum, we define $\vec{v}$ such that $\vec{v}_i = \sqrt{\kl{f^*_i}{\hat{f}_i}}$ for $i=1,\dots,n$:
\begin{align}
\sum_{i=1}^n \sqrt{\kl{f^*_i}{\hat{f}_i}}
&= \left|\left|\vec{v}\right|\right|_1 \nonumber \\
&\le \sqrt{n}\left|\left|\vec{v}\right|\right|_2 \label{holders}\\
&= \sqrt{n}\left( \sum_{i=1}^n \kl{f^*_i}{\hat{f}_i}
    \right)^{\frac{1}{2}} \nonumber \\
&\le \sqrt{n \cdot M(n)} \label{ref_kl_lemma}
\end{align}

\noindent where \eqref{holders} uses H\"{o}lders inequality and \eqref{ref_kl_lemma} uses Lemma \ref{lemma:kl} and the assumption that $M(n) \ge 1$. \end{proof}

Finally, we can utilize the assumption and lemmas to bound the cumulative causality regret:

\begin{theorem} \label{thm:main_result}
Let the worst case regret for the predictors $\estfxr{i}$ and $\estfxc{i}$ be bounded by $R^{(r)}_n(\refclassr_n) \le M^{(r)}(n)$ and $R^{(c)}_n(\refclassc_n) \le M^{(c)}(n)$, respectively. Then, for any collection of observations $(x^n,y^n,z^n)\in \mc{X}^n \times \mc{Y}^n \times \mc{Z}^n$ such that Assumption \ref{assumption:gbound} holds with bound $L$, we have:
\begin{equation} \label{causal_bound}
\begin{aligned}
\sum_{i=1}^n & \left| \estcyx(i) - \optcyx(i) \right| \le \\
&M^{(c)}(n) + M^{(r)}(n) + 
    \frac{\left|\mc{X}\right|L}{\sqrt{2}}\sqrt{n \cdot M^{(c)}(n)}
\end{aligned}
\end{equation}
\end{theorem}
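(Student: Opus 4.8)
The plan is to bound $|\estcyx(i) - \optcyx(i)|$ at each time $i$ by expanding both KL divergences against a common reference point, and then to sum. First I would write $\estcyx(i) = \kl{\estfxc{i}}{\estfxr{i}}$ and $\optcyx(i) = \kl{\optfxc{i}}{\optfxr{i}}$ and insert an intermediate hybrid term $\kl{\optfxc{i}}{\estfxr{i}}$, so that by the triangle inequality
\begin{equation*}
\left|\estcyx(i) - \optcyx(i)\right| \le \left|\kl{\estfxc{i}}{\estfxr{i}} - \kl{\optfxc{i}}{\estfxr{i}}\right| + \left|\kl{\optfxc{i}}{\estfxr{i}} - \kl{\optfxc{i}}{\optfxr{i}}\right|.
\end{equation*}
The second term isolates the effect of replacing the restricted predictor by the restricted reference, with the complete reference fixed as the averaging distribution; the first term isolates the effect of the complete predictor versus reference.

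For the second term, I would expand both divergences as $E_{\optfxc{i}}[\log\optfxc{i}(X)] - E_{\optfxc{i}}[\log\estfxr{i}(X)]$ and $E_{\optfxc{i}}[\log\optfxc{i}(X)] - E_{\optfxc{i}}[\log\optfxr{i}(X)]$; the $\log\optfxc{i}$ terms cancel, leaving $\left|E_{\optfxc{i}}[\log\optfxr{i}(X)/\estfxr{i}(X)]\right| = \left|E_{\optfxc{i}}[r(\estfxr{i},\optfxr{i},X)]\right|$, whose sum over $i$ is bounded by $M^{(r)}(n)$ directly by Assumption~\ref{assumption:referencekl}. For the first term, I would similarly cancel: $\kl{\estfxc{i}}{\estfxr{i}} - \kl{\optfxc{i}}{\estfxr{i}}$ splits into a difference of entropy-like terms $\left(E_{\optfxc{i}}[\log\optfxc{i}(X)] - E_{\estfxc{i}}[\log\estfxc{i}(X)]\right)$ plus $\left(E_{\estfxc{i}}[\log\tfrac{1}{\estfxr{i}(X)}] - E_{\optfxc{i}}[\log\tfrac{1}{\estfxr{i}(X)}]\right)$. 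Rearranging, this equals $\left(E_{\estfxc{i}}[g_i(X)] - E_{\optfxc{i}}[g_i(X)]\right) + \kl{\estfxc{i}}{\optfxc{i}}$ where $g_i(x) = \log\bigl(\optfxc{i}(x)/\estfxr{i}(x)\bigr)$ — indeed $g_i = -\log\estfxr{i} + \log\optfxc{i}$, and the leftover $-E_{\estfxc{i}}[\log\estfxc{i}] + E_{\estfxc{i}}[\log\optfxc{i}]$ is exactly $-\kl{\estfxc{i}}{\optfxc{i}}$, which is $\le 0$, so it only helps the bound. The function $g_i$ is bounded: $|\log\estfxr{i}(x) - \log\optfxc{i}(x)| \le |\log(\estfxc{i}(x)/\estfxr{i}(x))| + |\log(\optfxc{i}(x)/\estfxc{i}(x))|$; the first piece is below $L$ by Assumption~\ref{assumption:gbound}, and I would absorb the second into $L$ (or note $\optcxy$ is assumed close to the true measure so the relevant ratios are controlled) so that $g_i$ maps into $[-L,L]$. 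Then Lemma~\ref{lemma:g_func} applied with the complete predictor, $K=L$, and worst-case regret $M^{(c)}(n)$ gives $\sum_i |E_{\estfxc{i}}[g_i] - E_{\optfxc{i}}[g_i]| \le \tfrac{|\mc{X}|L}{\sqrt 2}\sqrt{n\cdot M^{(c)}(n)}$, while $\sum_i \kl{\estfxc{i}}{\optfxc{i}}$... wait, this is the KL from predictor to reference, which is the reverse of Lemma~\ref{lemma:kl}; since it enters with a favourable sign I would simply drop it, or alternatively keep the hybrid term as $\kl{\optfxc{i}}{\estfxr{i}}$ chosen so the leftover divergence is $\kl{\optfxc{i}}{\estfxc{i}}$ and bounded by $M^{(c)}(n)$ via Lemma~\ref{lemma:kl}. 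Combining the three contributions $M^{(c)}(n)$, $M^{(r)}(n)$, and $\tfrac{|\mc{X}|L}{\sqrt 2}\sqrt{n\cdot M^{(c)}(n)}$ yields \eqref{causal_bound}.

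The main obstacle I anticipate is the bookkeeping in the first term: getting the cancellations to land on exactly one application of Lemma~\ref{lemma:g_func} (with the right boundedness constant — reconciling the $L$ from Assumption~\ref{assumption:gbound}, which bounds $\log(\estfxc{i}/\estfxr{i})$, against the quantity $\log(\optfxc{i}/\estfxr{i})$ that actually appears) and one application of Lemma~\ref{lemma:kl}, with every residual KL term carrying a sign that makes it safe to discard. Choosing the intermediate term as $\kl{\optfxc{i}}{\estfxr{i}}$ rather than $\kl{\estfxc{i}}{\optfxr{i}}$ is what makes the restricted part collapse cleanly onto Assumption~\ref{assumption:referencekl} (whose averaging measure is precisely $\optfxc{i}$) and the complete part collapse onto Lemma~\ref{lemma:g_func} with averaging measures $\estfxc{i},\optfxc{i}$; I would verify that this is indeed the decomposition that uses $M^{(c)}(n)$ only for the $\sqrt{n\cdot M^{(c)}(n)}$ term and a single clean $M^{(c)}(n)$, matching the right-hand side of \eqref{causal_bound} exactly.
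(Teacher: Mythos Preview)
Your decomposition via the hybrid term $\kl{\optfxc{i}}{\estfxr{i}}$ is the right idea and in fact yields an argument at least as clean as the paper's, but the bookkeeping in the first term is off in two places and the plan does not close as written.

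First, with your choice $g_i(x)=\log\bigl(\optfxc{i}(x)/\estfxr{i}(x)\bigr)$ a direct computation gives
\[
\kl{\estfxc{i}}{\estfxr{i}}-\kl{\optfxc{i}}{\estfxr{i}}
=\bigl(E_{\estfxc{i}}[g_i]-E_{\optfxc{i}}[g_i]\bigr)+\kl{\estfxc{i}}{\optfxc{i}},
\]
so the leftover divergence is $+\kl{\estfxc{i}}{\optfxc{i}}\ge 0$, not $\le 0$; after taking the absolute value it cannot be ``dropped,'' and it is the KL from predictor to reference, which Lemma~\ref{lemma:kl} does not control. Second, Assumption~\ref{assumption:gbound} bounds $\log\bigl(\estfxc{i}/\estfxr{i}\bigr)$, not your $g_i$; ``absorbing'' $\lvert\log(\optfxc{i}/\estfxc{i})\rvert$ into $L$ is not justified by the stated hypotheses, so Lemma~\ref{lemma:g_func} does not apply with $K=L$.

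Both problems vanish if you instead take $\hat{g}_i(x)=\log\bigl(\estfxc{i}(x)/\estfxr{i}(x)\bigr)$. The same computation then gives
\[
\kl{\estfxc{i}}{\estfxr{i}}-\kl{\optfxc{i}}{\estfxr{i}}
=\bigl(E_{\estfxc{i}}[\hat{g}_i]-E_{\optfxc{i}}[\hat{g}_i]\bigr)-\kl{\optfxc{i}}{\estfxc{i}},
\]
so after the triangle inequality the expectation difference is handled by Lemma~\ref{lemma:g_func} with $K=L$ (now legitimate, since $\lvert\hat{g}_i\rvert\le L$ is exactly Assumption~\ref{assumption:gbound}), and $\sum_i\kl{\optfxc{i}}{\estfxc{i}}\le M^{(c)}(n)$ by Lemma~\ref{lemma:kl}. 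Combined with the $M^{(r)}(n)$ you correctly extracted from the second term via Assumption~\ref{assumption:referencekl}, this gives \eqref{causal_bound}. The paper arrives at the same three ingredients by a slightly different route: it subtracts $\lvert E_{\optfxc{i}}[\hat{g}_i]-E_{\estfxc{i}}[\hat{g}_i]\rvert$ from $\lvert\estcyx(i)-\optcyx(i)\rvert$, uses the reverse triangle inequality to collapse the remainder to $\lvert E_{\optfxc{i}}[g^*_i-\hat{g}_i]\rvert$ (which splits into $\kl{\optfxc{i}}{\estfxc{i}}$ and the Assumption~\ref{assumption:referencekl} term), and then adds the subtracted piece back and bounds it with Lemma~\ref{lemma:g_func}. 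The essential point in both approaches is that the bounded function fed to Lemma~\ref{lemma:g_func} must be $\hat{g}_i$, because that is the only ratio Assumption~\ref{assumption:gbound} controls.
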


\begin{proof}
We begin by defining the functions:
\begin{equation*}
\hat{g}_i(X) \triangleq \log \frac{\estfxc{i}(X)}{\estfxr{i}(X)} \ \ \ \ \
g^*_i(X) \triangleq \log \frac{\optfxc{i}(X)}{\optfxr{i}(X)}.
\end{equation*}

\noindent Using the definition of the causal measure and KL-divergence:
\begin{align}
&\sum_{i=1}^n \left| \estcyx(i) - \optcyx(i) \right|
\nonumber \\
& \ \ \ \ \ \ \ \ \ \ - \left|
E_{\optfxc{i}}\left[ \hat{g}_i(X)\right] -
E_{\estfxc{i}}\left[ \hat{g}_i(X)\right]
\right| \label{beginning} \\
&= \sum_{i=1}^n
\left|
E_{\optfxc{i}}\left[ g^*_i(X)\right] -
E_{\estfxc{i}}\left[ \hat{g}_i(X)\right]
\right| \nonumber \\
& \ \ \ \ \ \ \ \ \ \ - \left|
E_{\optfxc{i}}\left[ \hat{g}_i(X)\right] -
E_{\estfxc{i}}\left[ \hat{g}_i(X)\right]
\right| \nonumber \\
&\le \sum_{i=1}^n
\bigg| \left|
E_{\optfxc{i}}\left[ g^*_i(X)\right] -
E_{\estfxc{i}}\left[ \hat{g}_i(X)\right]
\right| \nonumber \\
& \ \ \ \ \ \ \ \ \ \ - \left|
E_{\optfxc{i}}\left[ \hat{g}_i(X)\right] -
E_{\estfxc{i}}\left[ \hat{g}_i(X)\right]
\right| \bigg| \label{absval} \\
&\le \sum_{i=1}^n
\bigg|
E_{\optfxc{i}}\left[ g^*_i(X)\right] -
E_{\estfxc{i}}\left[ \hat{g}_i(X)\right] \nonumber \\
& \ \ \ \ \ \ \ \ \ \ -
E_{\optfxc{i}}\left[ \hat{g}_i(X)\right] +
E_{\estfxc{i}}\left[ \hat{g}_i(X)\right] \bigg| \label{rev_tri} \\
&= \sum_{i=1}^n \left|
E_{\optfxc{i}}\left[ g^*_i(X) - \hat{g}_i(X)\right] \right| \nonumber \\
&= \sum_{i=1}^n \left|
E_{\optfxc{i}}\left[
\log \frac{\optfxc{i}(X)}{\estfxc{i}(X)}
- \log \frac{\optfxr{i}(X)}{\estfxr{i}(X)}
\right] \right| \nonumber \\
&\le \sum_{i=1}^n \left|
\kl{\optfxc{i}}{\estfxc{i}} \right|+
\left|
E_{\optfxc{i}}\left[
\log \frac{\optfxr{i}(X)}{\estfxr{i}(X)}
\right]
\right| \label{tri} \\
&\le M^{(c)}(n) + M^{(r)}(n) \label{regretlemma}
\end{align}

\noindent where \eqref{absval} follows from the properties of absolute value, \eqref{rev_tri} follows from the reverse triangle inequality, \eqref{tri} follows from the triangle inequality, and \eqref{regretlemma} follows from non-negativity of the KL-divergence, Lemma \ref{lemma:kl}, and Assumption \ref{assumption:referencekl}. Moving the second term of \eqref{beginning} to the other side of the inequality yields:
\begin{align}
&\sum_{i=1}^n \left| \estcyx(i) - \optcyx(i) \right| \nonumber \\
& \ \ \le M^{(c)}(n) + M^{(r)}(n) + \left|
E_{\optfxc{i}}\left[ \hat{g}_i(X)\right] -
E_{\estfxc{i}}\left[ \hat{g}_i(X)\right]
\right| \nonumber \\
& \ \ \le M^{(c)}(n) + M^{(r)}(n) +
\frac{\left|\mc{X}\right|L}{\sqrt{2}}\sqrt{n \cdot M^{(c)}(n)}
\label{final_eq}
\end{align}

\noindent where \eqref{final_eq} follows from Assumption \ref{assumption:gbound} (boundedness of $\hat{g}_i$) and Lemma \ref{lemma:g_func}. This concludes the proof. \end{proof}

\section{Simulations} \label{simulations}

\begin{comment}
- Application to conditional bernoulli model using CTW algorithm (Fig 1)
- Explicit derivation of bounds and computation of regret
- Use changing environment meta algorithm for changing parameters to demonstrate that it appears to work even though theorem doesn't apply (Fig 2)
\end{comment}

\begin{figure*}
\includegraphics[width=\linewidth]{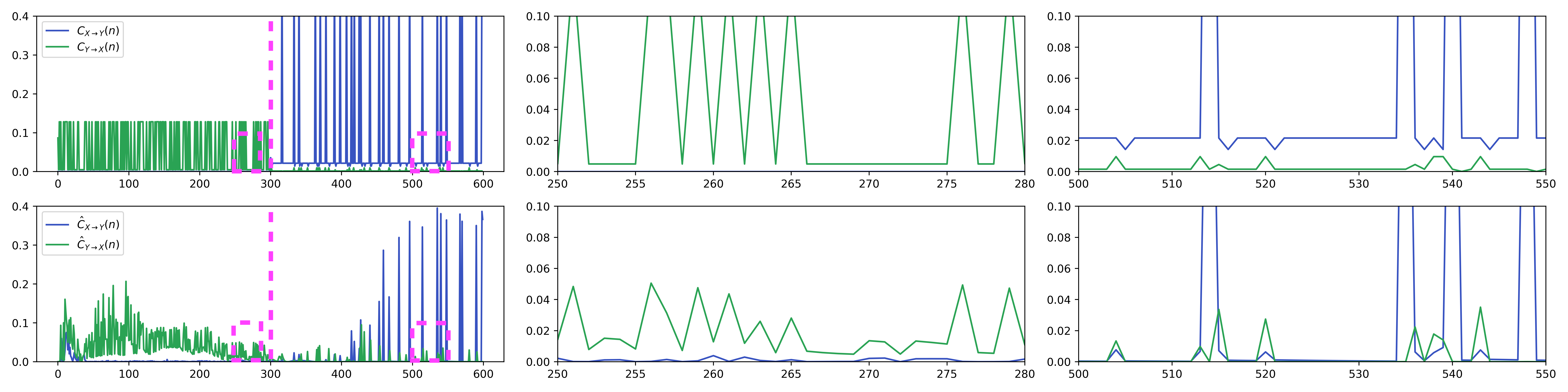}
\caption{True (top) and estimated (bottom) causal measures for entire time series (left) and selections pre (center) and post (right) parameter change point. Empirical values of $L$ (Assumption \ref{assumption:gbound}) are 1.763 and 1.227 for $X\rightarrow Y$ and $Y\rightarrow X$, respectively.}
\label{fig:results}
\end{figure*}

We begin by demonstrating the estimation of the proposed causal measure on a pair of jointly Markov binary processes $X^n$ and $Y^n$ that undergo a change point in the underlying parameters. For $j\in \{1,2\}$, we use a logistic model to represent the probabilities that $X_i$ and $Y_i$ are equal to one given the complete history:
\begin{align*}
p_{X_i}^{(c)}(x_{i-1},y_{i-1}) \triangleq
\frac{e^{\theta^j_x+\theta^j_{xx}x_{i-1}+\theta^j_{yx}y_{i-1}}}
{1+e^{\theta^j_x + \theta^j_{xx}x_{i-1}+ \theta^j_{yx}y_{i-1}}} \\
p_{Y_i}^{(c)}(x_{i-1},y_{i-1}) \triangleq
\frac{e^{\theta^j_y + \theta^j_{yy}y_{i-1}+ \theta^j_{xy}x_{i-1}}}
{1+e^{\theta^j_y + \theta^j_{yy}y_{i-1}+ \theta^j_{xy}x_{i-1}}}
\end{align*}

\noindent To compute the true causal measure, we additionally need the restricted probabilities. It is important to note that the joint-Markovicity does not imply that the processes are individually Markov. As such, the restricted probability that $X_i$ is equal to one given the restricted history is defined using a recursively updated distribution over the hidden $Y_i$:
\begin{eqnarray*}
p_{X_i}^{(r)}(x^{i-1}) \triangleq
p^{(h)}_{Y_{i-1}}\cdot p_{X_i}^{(c)}(x_{i-1},1) +
\bar{p}^{(h)}_{Y_{i-1}}\cdot p_{X_i}^{(c)}(x_{i-1},0)
\\
p^{(h)}_{Y_i} \triangleq
p^{(h)}_{Y_{i-1}}\cdot p_{Y_i}^{(c)}(x_{i-1},1) +
\bar{p}^{(h)}_{Y_{i-1}}\cdot p_{Y_i}^{(c)}(x_{i-1},0)
\end{eqnarray*}

\noindent where $p^{(h)}_{Y_i}$ is the probability of the \emph{hidden} $Y_i$ being one and $\bar{p}^{(h)}_{Y_i} \triangleq 1-p^{(h)}_{Y_i}$. We can define $p_{Y_i}^{(r)}(y^{i-1})$ similarly using $p^{(h)}_{X_i}$.

To estimate the causal measure (in both directions) we estimate both the complete and restricted probabilities using a Bayesian updating scheme over a discretized parameter space with a uniform prior. To accommodate the parameter change point, we incorporate the shrinking to the prior technique \cite{sancetta2012universality} with $\alpha=0$ and $\lambda=0.9999$ to the updating procedure. For space considerations we have omitted further details of experiments and provide detailed code on Github\footnote{\url{https://github.com/gabeschamberg/sample-path-causal-measure/blob/master/notebooks/isit_simulation.ipynb}}.

Figure \ref{fig:results} shows the true and estimated causal measures. We can see that the spikes in causal influence are captured by the estimate. This example illustrates that the proposed causal measure is not immune to the difficulties of change point scenarios in that it takes roughly 100 samples after the change point for the estimator to adapt. However, a key point is that once the estimator does adapt, the temporal resolution is much better than what could be expected from windowing techniques, as the infrequent spikes seen in rightmost column of Figure \ref{fig:results} are localized to a single time point.

\section{Discussion} \label{discussion}

We have presented a non-negative measure of local causal influence that captures the time-varying nature of causal relationships that is inherent to both stationary and non-stationary settings. Furthermore, we have shown that under mild assumptions, the finite sample performance of an estimator of the measure can be determined as a function of the regret of the sequential predictors used to implement the estimator.

It is important to note that the proposed causal measure does not solve the problem of estimating causal influence in time-varying settings, but rather it provides a perspective on causal influence that is naturally extended to any setting for which there are effective sequential prediction algorithms. By conditioning on the observed past, we avoid the need to decide a window length (to approximate an expectation) when estimating DI and TE in a time-varying setting.

Further research includes calculating the causal regret for specific estimators and carefully characterizing the circumstances for which the assumptions hold. Additionally, estimation of the measure on real data would enable moving past simply identifying the direction of information flow between real-word processes to identifying particular patterns for which the causal influence is greatest.

%%%%%%%%%%%%%%%%%%%%%%%%%%%%%%%%%%%%%%%%%%%%%%%%%%%%%%%%%%%%%%%%%%%%%%%%%%%%%%%

\printbibliography

\end{document}